\newtheorem{definition}{Definition}
\newtheorem{proposition}[definition]{Proposition}
\newtheorem{lemma}[definition]{Lemma}
\def\squareforqed{\hbox{\rlap{$\sqcap$}$\sqcup$}}
\def\qed{\ifmmode\squareforqed\else{\unskip\nobreak\hfil
\penalty50\hskip1em\null\nobreak\hfil\squareforqed
\parfillskip=0pt\finalhyphendemerits=0\endgraf}\fi}
\def\endenv{\ifmmode\;\else{\unskip\nobreak\hfil
\penalty50\hskip1em\null\nobreak\hfil\;
\parfillskip=0pt\finalhyphendemerits=0\endgraf}\fi}
\newenvironment{proof}{\noindent \textbf{{Proof.~} }}{\qed}
\newcommand{\nc}{\newcommand}
\nc{\bra}[1]{\langle#1|} \nc{\ket}[1]{|#1\rangle} \nc{\proj}[1]{|
#1\rangle\!\langle #1 |} \nc{\ketbra}[2]{|#1\rangle\!\langle#2|}
\nc{\braket}[2]{\langle#1|#2\rangle} 
\nc{\norm}[1]{\lVert#1\rVert} \nc{\abs}[1]{|#1|}
\nc{\lar}{\leftarrow} \nc{\rar}{\rightarrow} \nc{\ox}{\otimes}
\nc{\op}[2]{|#1\rangle\!\langle#2|}
\nc{\ip}[2]{\langle#1|#2\rangle} \nc{\dg}{\dagger}
\nc{\I}{\mathbb{I}}
\nc{\cA}{{\cal A}} \nc{\cB}{{\cal B}} \nc{\cC}{{\cal C}}
\nc{\cD}{{\cal D}} \nc{\cE}{{\cal E}} \nc{\cF}{{\cal F}}
\nc{\cG}{{\cal G}} \nc{\cH}{{\cal H}} \nc{\cI}{{\cal I}}
\nc{\cJ}{{\cal J}} \nc{\cK}{{\cal K}} \nc{\cL}{{\cal L}}
\nc{\cM}{{\cal M}} \nc{\cN}{{\cal N}} \nc{\cO}{{\cal O}}
\nc{\cP}{{\cal P}} \nc{\cR}{{\cal R}} \nc{\cS}{{\cal S}}
\nc{\cT}{{\cal T}} \nc{\cX}{{\cal X}} \nc{\cZ}{{\cal Z}}
\begin{document}
\date{\today}
\title{Quantum Correlations in Large-Dimensional States of High Symmetry}
\author{Eric Chitambar}
\affiliation{The Perimeter Institute for Theoretical Physics, Waterloo, Ontario N2L 2Y5, Canada}

\begin{abstract}
In this article, we investigate how quantum correlations behave for the so-called Werner and pseudo-pure families of states.  The latter refers to states formed by mixing any pure state with the totally mixed state.  We derive closed expressions for the Quantum Discord (QD) and the Relative Entropy of Quantumness (REQ) for these families of states.  For Werner states, the classical correlations are seen to vanish in high dimensions while the amount of quantum correlations remain bounded and become independent of whether or not the the state is entangled.  For pseudo-pure states, nearly the opposite effect is observed with both the quantum and classical correlations growing without bound as the dimension increases and only as the system becomes more entangled.  Finally, we verify that pseudo-pure states satisfy the conjecture of [\textit{Phys. Rev. A} \textbf{84}, 052110 (2011)] which says that the Geometric Measure of Discord (GD) always upper bounds the squared Negativity of the state.
\end{abstract}

\pacs{03.65.Ta, 03.65.Ud, 03.67.-a}

\maketitle

\section{Introduction}
\label{Sect:Intro}

Quantum and classical systems differ in many fascinating ways.  While quantum entanglement is one particular phenomenon unique to the quantum world, other non-classical features emerge when considering multipartite quantum systems.  Quite generally, correlations in a multipartite system exist when the state of some subsystem depends on the state of another.  Quantum systems, even those lacking entanglement, are capable of possessing correlations that cannot be simulated by classical physics.  \textit{Quantum correlations} (QC) beyond entanglement have recently received much attention as they have been discovered to play a prominent role in quantum information tasks such as thermodynamic work extraction \cite{Oppenheim-2002a}, the local broadcasting of correlations \cite{Piani-2008a}, the locking of classical correlations \cite{Datta-2009a, Boxio-2011a}, quantum state merging \cite{Cavalcanti-2011a, Madhok-2011a}, and the activation of entanglement in a measurement \cite{Streltsov-2011a, Piani-2011a}. 

Analogous to the case of entanglement measures, there exists no single quantifier of QC that adequately captures the ``quantumness'' in a given system.  Instead, the measure one uses to discuss correlations becomes relative to the task under investigation.  Conversely, any practical measure of QC ought to have some precise correspondence to a physical process, in addition to satisfying a number of other necessary properties \cite{Brodutch-2011a}.  In recent years, there have been various proposed measures for quantum correlations and a nice review of them can be found in Refs. \cite{Lang-2011a, Modi-2011a}.  

However despite the wealth of research conducted on the theory of QC \cite{Ollivier-2001a, Henderson-2001a, Oppenheim-2002a, Groisman-2005a, Horodecki-2005a, Luo-2008a, Piani-2008a, Wu-2009a, Ferraro-2010a, Dakic-2010a, Brodutch-2010a, Cavalcanti-2011a, Madhok-2011a, Boxio-2011a, Brodutch-2011b, Al-Qasimi-2011a, Chen-2011a, Cornelio-2011a}, very few calculations have been performed.  One reason for this is that the computations reduce to inherently difficult optimization problems.  Consequently, a general sense of how QC behave throughout state space is still lacking.  Furthermore, since the bulk of calculations already conducted are limited to qubit systems \cite{Luo-2008b, Ali-2010a, Yu-2011a, Vinjanampathy-2012a}, little is known about quantum correlations in larger dimensions or what new features emerge beyond qubits.

In this article we seek to shed light on some of these issues by computing the quantum correlations for the so-called Werner states \cite{Werner-1989a} and pseudo-pure states of $d\otimes d$ systems.  Werner states take the form
\begin{equation*}
\rho_W=\frac{2(1-\lambda)}{d(d+1)}\Pi^++\frac{2\lambda}{d(d-1)}\Pi^-
\end{equation*}
where $0\leq\lambda\leq 1$, and  $\Pi^+$ and $\Pi^-$ are projectors onto the symmetric and anti-symmetric subspaces of $\mathbb{C}^d\otimes\mathbb{C}^d$ respectively.  These states were originally introduced in the study of Bell inequalities as their measurement statistics can be described by a local hidden variable model \cite{Werner-1989a}.  Since then, they have been used to study various other aspects of entanglement theory \cite{Horodecki-2009a}.  The pseudo-pure (PP) states refer to states having the form
\begin{equation*}
\rho_{PP}=\alpha\op{\psi}{\psi}+\frac{1-\alpha}{d^2-1}(\mathbb{I}-\op{\psi}{\psi})
\end{equation*}
where $\ket{\psi}$ is an arbitrary pure state.  Physically we can think of pseudo-pure states as the mixing of some original pure state $\ket{\psi}$ with ``white noise'' represented by the completely mixed state $\mathbb{I}/d^2$.  PP states have be studied as a possible resource for NMR quantum computing \cite{Gershenfeld-1997a, Cory-1997a}, and they have also been proven useful for quantum computing without entanglement \cite{Biham-2004a}.  This latter result strongly motivates the work of this article since it is conjectured that quantum correlations beyond entanglement may be responsible (at least partially) for computational speed-ups in quantum computers \cite{Lanyon-2008a, Datta-2011a}; thus investigating quantum correlations in pseudo-pure states is a highly relevant project.  Note that when $\ket{\psi}$ is maximally entangled, $\rho$ becomes a so-called isotropic state, which represents another important class of states in entanglement theory \cite{Horodecki-2009a}.

Before presenting our specific findings for these family of states, we will first briefly review in Section I three important measures of quantum correlations: Quantum Discord (QD), Relative Entropy of Quantumness (REQ), and the Geometric Measure of Discord (GD).  In Section II we derive analytic formulas for the quantum correlations of Werner and PP families of states and study their behavior in high dimensions.  It is found that for PP states, the discord strictly decreases as more white noise is added.  We then consider a conjecture by Girolami and Adesso stating that the Geometric Measure of Discord is always an upper bound of the squared Negativity: $\mathcal{D}_G^{\leftarrow}(\rho)\geq \mathcal{N}^2(\rho)$ \cite{Girolami-2011a}.  While this has been proven for pure, Werner and isotropic states, here we show that it holds true for pseudo-pure states as well.  Concluding remarks are given in Sect. \ref{Sect:Conclusion}.

\section{Measures of Quantumness}
\label{Sect:QDREQ}
\subsection{Quantum Discord}
The definition of discord can be viewed as the mismatch in representations of mutual information that emerges when transitioning from classical to quantum information theory \cite{Ollivier-2001a}.  Classically, we consider random variables $A$ and $B$ and use the Shannon entropy $H(\cdot)$ as a measure of uncertainty in the given variable.  The mutual information between $A$ and $B$ is given by
\begin{align}
I(A:B)&=H(A)+H(B)-H(A,B)\notag\\
&=H(A)-H(A|B).
\end{align}
Here, $H(A|B)=\sum_b p(b)H(A|B=b)$ is the conditional entropy and indicates the expected uncertainty in $A$ that remains after learning $B$.  

In quantum information theory, the random variables become density operators with the joint ``variable'' being $\rho^{AB}$ and the marginals being the reduced states $\rho^A=tr_B(\rho^{AB})$ and $\rho^B=tr_A(\rho^{AB})$.  The uncertainty in a state $\rho$ is measured by the von Neumann entropy $S(\rho)=-tr(\rho\log\rho)$, and the quantum mutual information is defined as
\begin{align}
I(\rho^{AB})=S(\rho^A)+S(\rho^B)-S(\rho^{AB}).
\end{align}
Besides being a direct correspondence to the classical mutual information, quantum mutual information has a similar operational interpretation as the total correlations, both quantum and classical, that are present in the state $\rho^{AB}$ \cite{Groisman-2005a}.  

However, generalizing the classical conditional entropy becomes problematic since the expected uncertainty of Alice's system after Bob measures depends on the choice of measurement Bob makes.  More precisely, after Bob performs some measurement given by the POVM elements $\{E_i\}$ for which $\sum_iE_i=\mathbb{I}$, the average uncertainty in Alice's system is $S(A|\{E_i\}):=\sum_{i}p_iS(\rho^A_i)$ with $p_i=tr(E_i\rho^{AB})$ and $\rho^A_i=tr_B(E_i\rho^{AB})/p_i$.  The quantity $S(A)-S(A|\{E_i\})$ can then be interpreted as the amount of shared classical information between Alice and Bob with respect to the measurement $\{E_i\}$, as this is the average reduction in the uncertainty of Alice's system after Bob obtains the classical outcome of his measurement.  The total Classical Correlations (CC) in the state $\rho^{AB}$ is then obtained by maximizing over all possible POVMs by Bob \cite{Henderson-2001a}: 
\begin{equation}
\label{Eq:CC}
\mathcal{C}^\leftarrow(\rho^{AB})=\max_{\{E_i\}}[S(\rho^A)-S(A|\{E_i\})],
\end{equation}
and a rigorous justification for treating this as a measure of classical correlations has been given in terms of randomness distillation \cite{Devetak-2004a}.  A natural measure for quantum correlations is then the total correlations, $I(\rho^{AB})$, less the classical correlations, $\mathcal{C}^\leftarrow(\rho^{AB})$.  This intuition leads to the definition of \textit{Quantum Discord} (QD):
\begin{align}
\label{Eq:QD}
\mathcal{D}^\leftarrow(\rho^{AB})&=I(\rho^{AB})-\mathcal{C}^\leftarrow(\rho^{AB})\notag\\
&=S(\rho^B)-S(\rho^{AB})+\min_{\{E_i\}}\sum_{i}p_iS(\rho^A_i)
\end{align}
Note that this quantity is inherently asymmetric, an issue addressed in Ref. \cite{Cavalcanti-2011a}.  In this article, we will always assume that the discord is being considered with respect to a measurement on Bob's system, although the symmetry of the states we study will make this distinction unnecessary.  Also, due to the concavity of the von Neumann entropy, we can assume that each of the POVM elements $E_i$ are rank one.

\subsection{Relative Entropy of Quantumness (REQ)}
Another prominent measure of QC is the quantum \textit{Relative Entropy of Quantumness} (REQ) (defined in Eq. \eqref{Eq:REQ}) \cite{Groisman-2007a, Saitoh-2008a, Luo-2008a, Modi-2010a}.  REQ is analogous to the relative entropy of entanglement in that the latter measures the ``distance'' between a state and the set of separable states while the former measures the distance between a state and the set of classical states.  In the bipartite case, a \textit{quantum-classical} state takes the form
\begin{equation}
\label{Eq:FC}
\rho=\sum_{i}p_{i}\rho_i\otimes\op{\eta_i}{\eta_i}
\end{equation} where the $\{\ket{\eta_i}\}$ form an orthonormal basis for Bob's system and the $\rho_i$ are density matrices on Alice's.  If, in addition, all of the $\rho_i$ commute, than $\rho$ is said to be \textit{fully classical}.  This is because $\rho$ can then be regarded as a distribution of states in some two-party classical system.  

Let $\mathcal{Q}c$ denote the set of quantum-classical states and $\mathcal{F}c$ the set of fully classical states.  Then we can define measures of quantumness which, roughly speaking, quantifies how far a given state is from one of these chosen sets.   More precisely, recall that for two quantum states $\rho$ and $\sigma$, the quantum relative entropy of $\rho$ relative to $\sigma$ is defined by $S(\rho||\sigma)=-S(\rho)-tr\rho\log\sigma$ \cite{Ohya-1993a}.  Based on this notion of ``closeness'' between two states, we define two versions of a Relative Entropy of Quantumness:
\begin{align}
\label{Eq:REQ}
\mathcal{Q}^{\leftarrow}(\rho^{AB})&=\min_{\sigma\in\mathcal{Q}c}S(\rho^{AB}||\sigma)\notag\\
\mathcal{Q}(\rho^{AB})&=\min_{\sigma\in\mathcal{F}c}S(\rho^{AB}||\sigma).
\end{align}
The first represents an asymmetric measure just like $\mathcal{D}^{\leftarrow}$ while the second is meant to capture the full quantum correlations of the state from both sides.  It is the full QC version that is typically referred to as REQ, but the one-sided version has also been studied in the literature \cite{Modi-2010a}.  

Concerning the relationship between QD and REQ, it is not difficult to show \cite{Modi-2010a} that $\mathcal{D}^{\leftarrow}\geq\mathcal{Q}^{\leftarrow}$ with equality being achieved when the optimal measurement on Bob's side consists of projecting into a local eigenbasis.  Furthermore, when equality does hold and the conditional post-measurement states of Alice commute after Bob projects in an eigenbasis, we have that $\mathcal{D}^{\leftarrow}=\mathcal{Q}^{\leftarrow}=\mathcal{Q}$.  This is because for such a state $\rho$, the closest $\sigma\in\mathcal{Q}c$ will also belong to $\mathcal{F}c$; it will simply be $\rho$ dephased by Alice and Bob in a local eigenbasis.  Note that this also implies equality with the so-called measurement-induced disturbance (MID) measure of quantum correlations $\mathcal{D}_{MID}$ \cite{Luo-2008a}.
\begin{proposition}
\cite{Modi-2010a, Modi-2011a} If $\mathcal{D}^{\leftarrow}(\rho)$ is obtained by a projection into a local eigenbasis of Bob such that the conditional post-measurement states of Alice commute, then
\begin{equation}
\mathcal{D}^{\leftarrow}(\rho)=\mathcal{D}_{MID}(\rho)=\mathcal{Q}^{\leftarrow}(\rho)=\mathcal{Q}(\rho).
\end{equation}
\end{proposition}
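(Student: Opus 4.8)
The plan is to produce a single state $\sigma^{*}$ that is simultaneously the closest quantum-classical state to $\rho$, the closest fully classical state, and the state obtained from $\rho$ by the eigenbasis dephasing that defines $\mathcal{D}_{MID}$, and to show that $S(\rho||\sigma^{*})=\mathcal{D}^{\leftarrow}(\rho)$; the four claimed equalities then follow at once. Let $\{\proj{\eta_i}\}$ be the hypothesised optimal discord measurement, with $\{\ket{\eta_i}\}$ an eigenbasis of $\rho^{B}$; set $p_i=\bra{\eta_i}\rho^{B}\ket{\eta_i}$ and $\rho_i^{A}=tr_B[(\I\otimes\proj{\eta_i})\rho]/p_i$, and define $\sigma^{*}=\sum_i p_i\,\rho_i^{A}\otimes\proj{\eta_i}$, i.e.\ $\rho$ dephased by Bob in his eigenbasis.

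First I would show $\mathcal{Q}^{\leftarrow}(\rho)=\mathcal{D}^{\leftarrow}(\rho)=S(\rho||\sigma^{*})$. For any fixed basis $\{\ket{\nu_i}\}$ on Bob, minimising $S(\rho||\sigma)$ over quantum-classical $\sigma$ that are diagonal in that basis yields, by Klein's inequality on the conditional blocks and the Gibbs inequality on the weights, the $B$-dephasing $\Pi^{B}_{\{\nu\}}(\rho)$, with value $S(\Pi^{B}_{\{\nu\}}(\rho))-S(\rho)$; here one uses the elementary fact that $S(\rho||\Pi(\rho))=S(\Pi(\rho))-S(\rho)$ for any projective dephasing $\Pi$, since $\log\Pi(\rho)$ is diagonal in the measurement basis. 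Hence $\mathcal{Q}^{\leftarrow}(\rho)=\min_{\{\nu_i\}}[S(\Pi^{B}_{\{\nu\}}(\rho))-S(\rho)]$. I would then use the decomposition $S(\Pi^{B}_{\{\nu\}}(\rho))-S(\rho)=[I(\rho)-I(\Pi^{B}_{\{\nu\}}(\rho))]+[S(\Pi^{B}_{\{\nu\}}(\rho^{B}))-S(\rho^{B})]$: the first bracket is the value of the discord objective for the measurement $\{\proj{\nu_i}\}$, hence $\ge\mathcal{D}^{\leftarrow}(\rho)$, and the second is $\ge 0$ by monotonicity of entropy under dephasing, so $\mathcal{Q}^{\leftarrow}(\rho)\ge\mathcal{D}^{\leftarrow}(\rho)$. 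For the particular basis $\{\ket{\eta_i}\}$ the second bracket vanishes (eigenbasis of $\rho^{B}$) and the first equals $\mathcal{D}^{\leftarrow}(\rho)$ (optimal measurement, by hypothesis); therefore $S(\rho||\sigma^{*})=S(\Pi^{B}_{\{\eta\}}(\rho))-S(\rho)=\mathcal{D}^{\leftarrow}(\rho)$, and this bound is saturated by $\sigma^{*}$.

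Next I would bring in the commutativity hypothesis. Because the $\rho_i^{A}$ pairwise commute, $\sigma^{*}=\sum_i p_i\rho_i^{A}\otimes\proj{\eta_i}$ is, by definition, fully classical, so $\sigma^{*}\in\mathcal{F}c\subseteq\mathcal{Q}c$. Since $\mathcal{F}c\subseteq\mathcal{Q}c$ always gives $\mathcal{Q}(\rho)\ge\mathcal{Q}^{\leftarrow}(\rho)$, while $\sigma^{*}$ is an admissible competitor in the minimisation defining $\mathcal{Q}$ with $S(\rho||\sigma^{*})=\mathcal{Q}^{\leftarrow}(\rho)$, we get $\mathcal{Q}(\rho)=\mathcal{Q}^{\leftarrow}(\rho)$. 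For the MID measure I would write $\mathcal{D}_{MID}(\rho)=I(\rho)-I(\hat\Pi(\rho))=S(\hat\Pi(\rho))-S(\rho)=S(\rho||\hat\Pi(\rho))$, where $\hat\Pi$ dephases $A$ and $B$ in eigenbases of their marginals (which it leaves fixed). Because $\rho^{A}=\sum_i p_i\rho_i^{A}$ is a sum of mutually commuting operators, it is diagonal in their common eigenbasis $\{\ket{a_j}\}$, which we may take as the eigenbasis of $\rho^{A}$; with that choice $\hat\Pi(\rho)=\Pi^{A}_{\{a\}}(\Pi^{B}_{\{\eta\}}(\rho))=\Pi^{A}_{\{a\}}(\sigma^{*})=\sigma^{*}$, since $\sigma^{*}$ is already diagonal in $\{\ket{a_j}\otimes\ket{\eta_i}\}$. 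Hence $\mathcal{D}_{MID}(\rho)=S(\rho||\sigma^{*})$ as well, and all four quantities coincide.

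The step I expect to require the most care is the equality $\mathcal{Q}^{\leftarrow}(\rho)=\mathcal{D}^{\leftarrow}(\rho)$: the above decomposition of $S(\Pi^{B}_{\{\nu\}}(\rho))-S(\rho)$ is exactly what makes transparent that \emph{both} summands attain their minima precisely when Bob's measurement is simultaneously an optimal discord measurement and an eigenbasis of $\rho^{B}$, which is the content of the hypothesis. A secondary point is the familiar degeneracy ambiguity in the definition of $\mathcal{D}_{MID}$: the argument establishes the claimed value for the (always available) choice of marginal eigenbases described above, and no ambiguity arises when the spectra of $\rho^{A}$ and $\rho^{B}$ are nondegenerate.
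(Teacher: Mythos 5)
Your proof is correct and follows the same route the paper itself gestures at (and delegates to the cited references): identify the $B$-dephased state $\sigma^{*}=\sum_ip_i\rho^A_i\otimes\op{\eta_i}{\eta_i}$ as the minimizer for $\mathcal{Q}^{\leftarrow}$, use the decomposition of $S(\Pi^B(\rho))-S(\rho)$ into the discord objective plus the marginal-entropy gain to get $\mathcal{Q}^{\leftarrow}=\mathcal{D}^{\leftarrow}$ under the eigenbasis/optimality hypothesis, and then observe that commutativity of the $\rho^A_i$ puts $\sigma^{*}$ in $\mathcal{F}c$ and makes it invariant under the additional $A$-dephasing defining $\mathcal{D}_{MID}$. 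The only difference is that you supply the full details of the standard argument that the paper only sketches, including the correct handling of the relative-entropy minimization over a fixed Bob basis via $S(\rho||\Pi(\rho))=S(\Pi(\rho))-S(\rho)$.
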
 
As we will prove below, both the Werner and PP families of states have this property, and thus their quantum correlations coincide with respect to these measures.    

\subsection{Geometric Measure of Discord}

The \textit{Geometric Measure of Discord} (GD) was originally introduced in Ref. \cite{Dakic-2010a} as a measure of quantum correlations, and recently, a physical interpretation has been given in terms of remote state preparation \cite{Dakic-2012a}.  As the name suggests, GD captures a distance-based notion of quantum correlations much like REQ.  For a state $\rho$, we define
\begin{equation}
\label{Eq:GQD}
\mathcal{D}^{\leftarrow}_G(\rho)=\frac{d}{d-1}\min_{\sigma\in\mathcal{Q}c}||\rho-\sigma||_2.
\end{equation}
Here $||\cdot||_2$ denotes the Hilbert-Schmidt norm; i.e. $||A||_2=\sqrt{tr[A^\dagger A]}$.  The above definition includes a normalization factor such that $0\leq\mathcal{D}^{\leftarrow}_G(\rho)\leq 1$, which is not its standard presentation.  The reason for normalizing is that the Hilbert-Schmidt norm is not stable under embedding into higher dimensions.  For instance $||\rho\otimes\mathbb{I}/n||_2=\sqrt{n}||\rho||_2$.  Thus without a normalization factor, it becomes difficult to interpret the numerical value of $\mathcal{D}^{\leftarrow}_G$ in high dimensions.  We also note that a symmetric version of $\mathcal{D}^{\leftarrow}_G$ can be introduced just like REQ by minimizing over the set $\mathcal{F}c$.  However, we will not consider this here.

Progress in computing $\mathcal{D}^{\leftarrow}_G(\rho)$ was made by Luo and Fu who showed that $\mathcal{D}^{\leftarrow}_G(\rho)=\tfrac{d}{d-1}\min tr(\rho-\sum_k\tau_k)^2$ where $\tau_k=\bra{\eta_k^B}\rho\ket{\eta_k^B}$ and the minimization is taken over all local bases $\ket{\eta_k^B}$ on Bob's system \cite{Luo-2010a}.  From here, it is relatively straightforward to show \cite{Luo-2010a, Modi-2011a} that
\begin{equation}
\label{Eq:GD}
\mathcal{D}^{\leftarrow}_G(\rho^{AB})=\frac{d}{d-1}\min_{\ket{\eta_k^B}}[tr(\rho^2)-\sum_k\tau_k^2].
\end{equation}

\section{Quantum Correlations in Specific Families of States}

\label{Sect:States} 

\subsection{Werner States}
In a $d\otimes d$ system, the Werner states constitute the family of states which are invariant under conjugation by $U\otimes U$ for any unitary $U$, i.e. $U\otimes U\rho U^\dagger\otimes U^\dagger=\rho$.  To characterize the form of these states, we introduce the flip operator $\mathbb{F}$ on $\mathbb{C}^d\otimes\mathbb{C}^d$ (i.e. $\mathbb{F}\ket{\phi\theta}=\ket{\theta\phi}$ for all $\ket{\theta}$ and $\ket{\phi}$).  Then the projectors on the symmetric and anti-symmetric subspaces of $\mathbb{C}^d\otimes\mathbb{C}^d$ are $\Pi^+=(\mathbb{I}+\mathbb{F})/2$ and $\Pi^-=(\mathbb{I}-\mathbb{F})/2$ respectively.  A key property of $\Pi^+$ and $\Pi^-$ is that they take the same form regardless of the basis used to represent them.  In other words, for any orthonormal basis $\{\ket{\phi_i}\}_{i=1...d}$ we have $\mathbb{F}=\sum_{i,j=1}^d\op{\phi_j\phi_i}{\phi_i\phi_j}$ and therefore
\begin{align}
\label{Eq:symproj}
\Pi^+&=\sum_{i=1}^d\op{\phi_i\phi_i}{\phi_i\phi_i}+\sum_{i<j}^{d(d-1)/2}\op{\Psi^+_{ij}}{\Psi^+_{ij}}\notag\\
\Pi^-&=\sum_{i<j}^{d(d-1)/2}\op{\Psi^-_{ij}}{\Psi^-_{ij}}
\end{align}
where $\ket{\Psi^+_{ij}}=\sqrt{\frac{1}{2}}\left(\ket{\phi_i\phi_j}+\ket{\phi_j\phi_i}\right)$, and $\ket{\Psi^-_{ij}}=\sqrt{\frac{1}{2}}\left(\ket{\phi_i\phi_j}-\ket{\phi_j\phi_i}\right)$.  A general Werner state is
\begin{equation}
\label{Eq:Werner}
\rho_{W}=\frac{2(1-\lambda)}{d(d+1)}\Pi^++\frac{2\lambda}{d(d-1)}\Pi^-
\end{equation}
with $0\leq\lambda\leq 1$.  Note that $\lambda=tr(\rho\Pi^-)$ in Eq. \eqref{Eq:Werner}, and it is known that $\rho$ is separable if and only if $\lambda\leq 1/2$ \cite{Werner-1989a}.

For the mutual information of a Werner state, we first have
\begin{equation}
\label{Eq:JointWerner}
S(\rho_W)=-\lambda\log\frac{2\lambda}{d(d-1)}-(1-\lambda)\log\frac{2(1-\lambda)}{d(d+1)}
\end{equation}
while both $\rho^A$ and $\rho^B$ are totally mixed with $S(\rho^A)=S(\rho^B)=\log d$.  Therefore,
\begin{align}
\label{Eq:MIWerner}
I(\rho_W)&=2\log d+\lambda\log\frac{2\lambda}{d(d-1)}+(1-\lambda)\log\frac{2(1-\lambda)}{d(d+1)}\notag\\
&=\log 2d+\lambda\log\frac{\lambda}{d-1}+(1-\lambda)\log\frac{1-\lambda}{d+1}.
\end{align}

To compute the minimum conditional entropy of Alice for a Werner state when Bob performs a measurement, we use the fact that this can be obtained by a rank one POVM measurement.  This can be represented by a set of operators $\{\op{\eta_k}{\tilde{\eta}_k}\}_{k=1,...}$ where $\ket{\tilde{\eta}_k}$ is some state with $b_k=\ip{\tilde{\eta}_k}{\tilde{\eta}_k}$ and $\ket{\eta_k}=b_k^{-1/2}\ket{\tilde{\eta}_k}$.  For each $\ket{\tilde{\eta}_k}$, let $\{\ket{k^\perp_j}\}_{j=1,...,d-1}$ be an orthonormal basis that spans the orthogonal complement to $span\{\ket{\tilde{\eta}_k}\}$.  We will use this basis to denote the Werner state:
\begin{align}
\label{WS2}
&\rho_W=\tfrac{2(1-\lambda)}{d(d+1)}\op{\eta_k\eta_k}{\eta_k\eta_k}\notag\\
&+\tfrac{1-\lambda}{d(d+1)}\sum_{j=1}^{d-1}[\ket{\eta_k k^\perp_j}+\ket{k^\perp_j\eta_k}][\bra{\eta_k k^\perp_j}+\bra{k^\perp_j \eta_k}]\notag\\
&+\tfrac{\lambda}{d(d-1)}\sum_{j=1}^{d-1}[\ket{\eta_k k^\perp_j}-\ket{k^\perp_j\eta_k}][\bra{\eta_k k^\perp_j}-\bra{k^\perp_j\eta_k}]+T
\end{align}
where $T$ is some operator orthogonal to $I\otimes\op{\tilde{\eta}_k}{\tilde{\eta}_k}$.  We then see that the (unnormalized) post-measurement state is
\begin{align}
\tfrac{2(1-\lambda)b_k}{d(d+1)}\op{\eta_k\eta_k}{\eta_k\eta_k}+\tfrac{(d-1+2\lambda)b_k}{d(d-1)(d+1)}\sum_{j=1}^{d-1}\op{k^\perp_j\eta_k}{k^\perp_j\eta_k}.
\end{align}
Thus we have $p_k=b_k/d$ and
\begin{equation}
\rho^A_k=\tfrac{2(1-\lambda)}{d+1}\op{\eta_k}{\eta_k}+\tfrac{d-1+2\lambda}{(d-1)(d+1)}\sum_{j=1}^{d-1}\op{k^\perp_j}{k^\perp_j}.
\end{equation}
As this equation provides a diagonal form for $\rho^A_k$, we immediately obtain
\begin{align}
\label{Eq:postmeasureWerner}
S(&\rho^A_k)=-\tfrac{2(1-\lambda)}{d+1}\log\tfrac{2(1-\lambda)}{d+1}-\tfrac{d-1+2\lambda}{d+1}\log\tfrac{d-1+2\lambda}{d^2-1},
\end{align}
which does not depend on $k$ or the particular POVM.  Substituting into the definition of discord and performing some simplifications, we find that
\begin{align}
\label{Eq:Wernerstatediscord}
\mathcal{D}^\leftarrow(\rho_{W})&=\log(d+1)+\lambda\log\tfrac{\lambda}{d-1}+(1-\lambda)\log\tfrac{1-\lambda}{d+1}\notag\\
&-\tfrac{2(1-\lambda)}{d+1}\log (1-\lambda)-\tfrac{d-1+2\lambda}{d+1}\log\tfrac{d-1+2\lambda}{2(d-1)}.
\end{align}
Since the conditional states of Alice are diagonal in the same basis after Bob projects into any orthonormal basis, the assumption of Proposition 1 is fulfilled, and thus all measures of quantum correlations for Werner states coincide.

\begin{figure}[t]
\includegraphics[scale=.5]{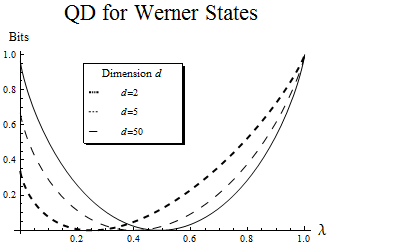}
\caption{\label{QDWerner}
The quantum discord in Werner states of various dimensions.} 
\end{figure}

\begin{figure}[t]
\includegraphics[scale=.5]{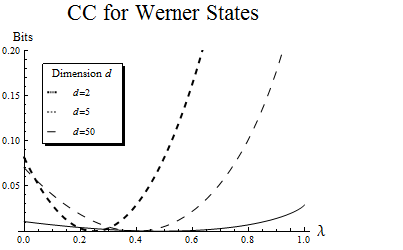}
\caption{\label{CCWerner}
The classical correlations in Werner states of various dimensions.} 
\end{figure}

The discord and classical correlations (Eq. \eqref{Eq:CC}) are plotted in Figs. \ref{QDWerner} and \ref{CCWerner} respectively for different dimensions.  The only fully classical state in the Werner family is the totally mixed state $\mathbb{I}/d^2$.  From Fig. \ref{QDWerner}, we can see that for higher dimensions the discord approaches complete symmetry about the point $\lambda=1/2$, which corresponds to the border of entangled versus non-entangled states.  We can explicitly see the symmetry as well as the boundedness by expanding Eq. \eqref{Eq:Wernerstatediscord} for large $d$:
\begin{equation}
\mathcal{D}^\leftarrow(\rho_{W})\approx 1-H(\lambda), \qquad d>>1
\end{equation}
where $H(\lambda)=-\lambda\log\lambda-(1-\lambda)\log(1-\lambda)$ is the binary Shannon entropy.  

The symmetry of the discord in higher dimensions means that for any entangled Werner state possessing less than 1 bit of QD, there exists a separable state which possesses the same amount of discord.   Recall that any separable state can be created by local operations and classical communication (LOCC) \cite{Werner-1989a}.  These results imply that almost any amount of discord achievable in the Werner family of states can be obtained by LOCC.  The one exception here is the 1 bit maximum which requires entanglement to be reached exactly (i.e. $\lambda=1$).

In Fig. \ref{CCWerner}, we can see that as $d$ grows large, the classical correlations go to zero and nearly all the correlations become quantum, for both entangled and separable states.  This can be confirmed by the fact that asymptotically, the mutual information $I(\rho_W)$ also becomes $1-H(\lambda)$, and so $\mathcal{D}^\leftarrow(\rho_W)\rightarrow I(\rho_W)$ as $d\to\infty$.  However, despite this general convergence, for any dimension $d$, the classical correlations exactly vanish for the one and only value of $\lambda=(d-1)/(2d)$.  This is the same parameter value for which QD vanishes and corresponds to the totally mixed state.  Thus, for any $d'>d$, the classical correlations will be larger at the point $\lambda=(d-1)/(2d)$ in the $d'\times d'$-dimensional Werner state.

We next compare QD to the entanglement of the state as measured by the Entanglement of Formation (Eof).  We note that similar comparisons between discord and entanglement have been made elsewhere \cite{Luo-2008a, Ali-2010a, Al-Qasimi-2011a}, but these studies were limited only to two qubits.  EoF for Werner states has a closed form expression given by \cite{Vollbrecht-2001a}:
\begin{align}
E&_f(\rho)=1-(1/2-\sqrt{\lambda(1-\lambda)})\log[1-2\sqrt{\lambda(1-\lambda)}]\notag\\
&-(1/2+\sqrt{\lambda(1-\lambda)})\log[1+2\sqrt{\lambda(1-\lambda)}]\;\;\;(\lambda>1/2).
\end{align}
Note that this quantity is independent of the dimension.  In Fig. \ref{EoFWerner}, we simultaneously plot QD and EoF for dimensions $d=2$ and $d=50$.  It can be seen that the EoF becomes a general upper bound for the discord as $d$ grows large.  However, careful analysis shows that for any finite $d$, there will always exist a range $[1/2,1/2+\epsilon(d))$ for which QC $>$ EOF.

\begin{figure}[t]
\includegraphics[scale=.5]{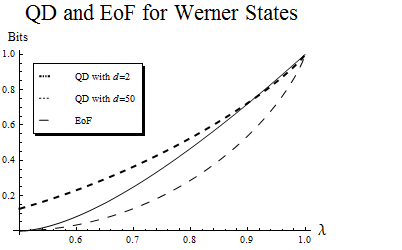}
\caption{\label{EoFWerner}
The quantum discord and Entanglement of Formation (EoF) of Werner states in different dimensions $d$.} 
\end{figure}

\subsection{Pseudo-Pure States}

We next consider the quantumness of pseudo-pure states.  These are states of the form:
\begin{align}
\rho_{PP}=\alpha\op{\psi}{\psi}+\beta(\mathbb{I}-\op{\psi}{\psi})
\end{align}
where $\ket{\psi}=\sum_{i=1}^d u_i\ket{ii}$ with $u_1\geq u_2\geq...\geq u_d\geq 0$ and $\beta=(1-\alpha)/(d^2-1)$.  From Ref. \cite{Vidal-1999b}, we know that $\rho$ is separable iff $\lambda\leq\frac{1+u_1u_2}{1+u_1u_2d^2}$.  We consider a rank one POVM $\{\op{\eta_k}{\tilde{\eta}_k}\}_{k=1,...}$ on Bob's system with $\ket{\tilde{\eta}_k}=\sum_{i=1}^dc_{ki}\ket{i}$.    Define $\ket{\psi_k}:=\frac{1}{\sqrt{p_k}}\ip{\tilde{\eta}_k}{\psi}$ where $p_k=tr[\ip{\tilde{\eta}_k}{\psi}\ip{\psi}{\tilde{\eta}_k}]=\sum_{i=1}^d|c_{ki}|^2u_i^2$.  Then upon outcome $k$, Alice's post-measurement state is
\begin{align}
\rho^A_k=&\frac{\sum_{i=1}^d|c_{ki}|^2[u_i^2(\alpha-\beta)+\beta]}{q_k}\op{\psi_k}{\psi_k}\notag\\
&+\frac{\sum_{i=1}^d|c_{ki}|^2\beta}{q_k}(\mathbb{I}-\op{\psi_k}{\psi_k})
\end{align}
in which $q_k=\sum_{i=1}^d|c_{ki}|^2[u_i^2(\alpha-\beta)+\beta d]$.  The average residual entropy is given by 
\begin{align}
&\sum_k p_kS(\rho^A_k)=\notag\\
&-\sum_k\sum_{i=1}^d|c_{ki}|^2[u_i^2(\alpha-\beta)+\beta]\log\tfrac{\sum_{i=1}^d|c_{ki}|^2[u_i^2(\alpha-\beta)+\beta]}{\sum_{i=1}^d|c_{ki}|^2[u_i^2(\alpha-\beta)+\beta d]}\notag\\
&-\sum_k\sum_{i=1}^d|c_{ki}|^2\beta(d-1)\log\tfrac{\sum_{i=1}^d|c_{ki}|^2\beta}{\sum_{i=1}^d|c_{ki}|^2[u_i^2(\alpha-\beta)+\beta d]}.
\end{align}
We next apply the log-sum inequality \cite{Cover-2006a} and use the fact that $\sum_k|c_{ki}|^2=1$ to obtain
\begin{align}
\sum_k q_kS(\rho^A_k)\geq &-\sum_{i=1}^d[u_i^2(\alpha-\beta)+\beta]\log\tfrac{u_i^2(\alpha-\beta)+\beta}{u_i^2(\alpha-\beta)+\beta d}\notag\\
&-\sum_{i=1}^d\beta(d-1)\log\tfrac{\beta}{u_i^2(\alpha-\beta)+\beta d}.
\end{align}
This lower bound is saturated by an orthogonal projective measurement by Bob in a local eigenbasis.  In fact, this is the only such measurement that obtains the lower bound.  To compute the discord, we use the facts that
\[S(\rho_{PP})=-\alpha\log\alpha-\beta(d^2-1)\log\beta\]
and
\[S(\rho^A)=-\sum_i[d\beta+u_i^2(\alpha-\beta)]\log[d\beta+u_i^2(\alpha-\beta)]\]
with the normalization $\beta=(1-\alpha)/(d^2-1)$ to find
\begin{align}
\label{Eq:DisPP}
\mathcal{D}^{\leftarrow}&(\rho_{PP})=\alpha\log\alpha+\tfrac{1-\alpha}{d+1}\log\tfrac{1-\alpha}{d^2-1}\notag\\
&-\sum_{i=1}^d[\tfrac{(1-\alpha)-u_i^2(1-d^2\alpha)}{d^2-1}]\log[\tfrac{(1-\alpha)-u_i^2(1-d^2\alpha)}{d^2-1}].
\end{align}
Like Werner states, the conditional post-measurement states of Alice are all diagonal in a local eigenbasis.  Therefore by Proposition 1 the quantum correlations of PP states are equal for the various measures.  

From Eq. \eqref{Eq:DisPP} we see that that in the large limit of $d$, the discord approaches
\begin{equation}
\mathcal{D}^{\leftarrow}(\rho_{PP})\approx \alpha \cdot S(tr_B\op{\psi}{\psi}), \qquad d>>1.
\end{equation}
In other words, the discord grows linearly with $\lambda$ with a slope equaling the reduced state entropy of $\ket{\psi}$.

From Eq. \eqref{Eq:DisPP}, we can deduce the following lemma.
\begin{lemma}
For a fixed state $\ket{\psi}$ the discord of its corresponding PP state is convex with respect to the mixing parameter $\alpha$.  Consequently, the discord of $\alpha\op{\psi}{\psi}+\beta(\mathbb{I}-\op{\psi}{\psi})$ monotonically decreases with $\alpha$.
\end{lemma}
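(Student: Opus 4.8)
\noindent\emph{Proof strategy.}~The slick route is to recognize the right-hand side of \eqref{Eq:DisPP} as a quantum relative entropy between two operators that are each \emph{affine} in $\alpha$, and then to invoke joint convexity of the relative entropy.  The derivation preceding \eqref{Eq:DisPP} shows that Bob's optimal measurement is the projective measurement $\{\proj{k}\}$ onto the Schmidt basis $\{\ket{k}\}$ of $\ket{\psi}$, and — crucially — that this one basis is optimal for \emph{every} $\alpha$, since $\{\ket{k}\}$ always diagonalizes $\rho^B_{PP}$ and always leaves Alice's conditional states diagonal.  Hence, by Proposition~1 and the discussion preceding it, $\mathcal{D}^{\leftarrow}(\rho_{PP})=S(\rho_{PP}||\sigma_{PP})$, where $\sigma_{PP}:=\sum_k(\mathbb{I}\otimes\proj{k})\,\rho_{PP}\,(\mathbb{I}\otimes\proj{k})$ is $\rho_{PP}$ dephased on Bob's side in the \emph{fixed}, $\alpha$-independent basis $\{\ket{k}\}$.

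I would first nail down that identity.  Writing $q_k=\bra{k}\rho^B_{PP}\ket{k}$ for Bob's outcome probabilities — these are the eigenvalues of $\rho^B_{PP}$, so $-\sum_k q_k\log q_k=S(\rho^B_{PP})$ — one has $\sigma_{PP}=\sum_k q_k\,\rho^A_k\otimes\proj{k}$ and hence $S(\sigma_{PP})=S(\rho^B_{PP})+\sum_k q_kS(\rho^A_k)$; moreover, since $\log\sigma_{PP}$ is block-diagonal with respect to $\{\ket{k}_B\}$ and the dephasing channel is self-adjoint and leaves such operators fixed, $\mathrm{tr}(\rho_{PP}\log\sigma_{PP})=\mathrm{tr}(\sigma_{PP}\log\sigma_{PP})$.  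Combining, $S(\rho_{PP}||\sigma_{PP})=S(\sigma_{PP})-S(\rho_{PP})=S(\rho^B_{PP})-S(\rho_{PP})+\sum_k q_kS(\rho^A_k)$, which is precisely \eqref{Eq:DisPP}.

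Next, because $\beta=(1-\alpha)/(d^2-1)$ is affine in $\alpha$, the state $\rho_{PP}(\alpha)=\alpha\proj{\psi}+\beta(\mathbb{I}-\proj{\psi})$ is affine in $\alpha$, and so is $\sigma_{PP}(\alpha)$ (the dephasing map is linear and fixed).  For $\alpha=t\alpha_1+(1-t)\alpha_2$, affinity gives $\rho_{PP}(\alpha)=t\rho_{PP}(\alpha_1)+(1-t)\rho_{PP}(\alpha_2)$ and the same for $\sigma_{PP}$, so joint convexity of the quantum relative entropy yields $\mathcal{D}^{\leftarrow}(\rho_{PP}(\alpha))\le t\,\mathcal{D}^{\leftarrow}(\rho_{PP}(\alpha_1))+(1-t)\,\mathcal{D}^{\leftarrow}(\rho_{PP}(\alpha_2))$, i.e.\ convexity in $\alpha$.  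For the monotonicity corollary, note $\mathcal{D}^{\leftarrow}\ge 0$ with $\mathcal{D}^{\leftarrow}(\rho_{PP})=0$ exactly at $\rho_{PP}=\mathbb{I}/d^2$, i.e.\ $\alpha=1/d^2$; a convex function attaining its global minimum there is monotone on each side of that point, so on the range $\alpha\in[1/d^2,1]$ corresponding to the pure state plus white noise the discord moves monotonically, decreasing as more white noise is admixed.

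The only genuine obstacle is the first step — certifying that $\sigma_{PP}$ may be taken to be $\rho_{PP}$ dephased in a \emph{single} basis valid for all $\alpha$; granted that, affinity of $\rho_{PP}(\alpha)$, joint convexity of the relative entropy, and the corollary are routine.  A self-contained alternative that works directly from \eqref{Eq:DisPP}: each summand there has the form $c\,f(\alpha)\log f(\alpha)$ with $f$ affine, whose second derivative is (a $\log$-base constant times) $c\,(f')^2/f$, so $\tfrac{d^2}{d\alpha^2}\mathcal{D}^{\leftarrow}(\rho_{PP})$ is proportional to $\tfrac{1}{\alpha}+\tfrac{1}{(d+1)(1-\alpha)}-\sum_{i=1}^{d}\tfrac{(d^2u_i^2-1)^2}{(d^2-1)^2\,x_i(\alpha)}$ with $x_i(\alpha)=[(1-\alpha)-u_i^2(1-d^2\alpha)]/(d^2-1)$, and convexity reduces to showing this is nonnegative.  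One can attack that by using convexity of $t\mapsto 1/t$ to bound $1/x_i(\alpha)\le(1-\alpha)/x_i(0)+\alpha/x_i(1)$ and then summing with the help of $\sum_i u_i^2=1$ (with extra care when some $u_i=0$ or at the endpoints $\alpha\in\{0,1\}$).  This makes ``from Eq.~\eqref{Eq:DisPP}'' literal at the price of a genuine inequality to grind through.
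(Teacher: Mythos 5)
Your main argument is correct but proceeds differently from the paper, which simply takes the second derivative of Eq.~\eqref{Eq:DisPP} with respect to $\alpha$ and bounds it below by $\tfrac{1}{(1-\alpha)(1+d)}>0$ via the elementary estimate $1-u_i^2+\alpha(d^2u_i^2-1)\geq\alpha(d^2u_i^2-1)$ applied term by term to the sum $\sum_i(d^2u_i^2-1)^2/[(d^2-1)(1-u_i^2+\alpha(d^2u_i^2-1))]$. You instead write $\mathcal{D}^{\leftarrow}(\rho_{PP}(\alpha))=S(\rho_{PP}(\alpha)\,||\,\Delta_B(\rho_{PP}(\alpha)))$ with $\Delta_B$ the dephasing in Bob's Schmidt basis, and then use affinity of $\alpha\mapsto\rho_{PP}(\alpha)$ together with joint convexity of the relative entropy. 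You correctly identify the crux of this route: the minimizing measurement must be the \emph{same} for every $\alpha$, which the paper's preceding derivation supplies (the Schmidt basis diagonalizes $\rho^B_{PP}$ and saturates the log-sum bound for all $\alpha$), so the discord coincides with a single convex function of $\alpha$ rather than a pointwise minimum of convex functions. Your approach buys generality and avoids the calculus — the same argument shows convexity of discord along any affine family admitting an $\alpha$-independent optimal basis — while the paper's computation, being an explicit strict lower bound on the second derivative, additionally yields \emph{strict} convexity and hence strict monotonicity away from $\alpha=1/d^2$. Your ``self-contained alternative'' is essentially the paper's proof, which you leave as ``an inequality to grind through''; the paper closes it in one line as above. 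Two minor points: your derivation of the monotonicity corollary (convex, nonnegative, vanishing at $\alpha=1/d^2$, hence monotone on $[1/d^2,1]$) is more explicit than the paper's bare ``consequently,'' and like the paper you should read ``decreases with $\alpha$'' as ``decreases as white noise is admixed,'' since the function is in fact non-decreasing in $\alpha$ on the physical range.
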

\begin{proof}
Computing the second derivative of Eq. \eqref{Eq:DisPP} with respect to $\alpha$ gives
\begin{align}
&\tfrac{1}{\alpha}+\tfrac{1}{(1-\alpha)(1+d)}-\sum_{i=1}^d\tfrac{(d^2u_i^2-1)^2}{(d^2-1)(1-u_i^2+\alpha(d^2u_i^2-1))}\notag\\
&\geq\tfrac{1}{\alpha}+\tfrac{1}{(1-\alpha)(1+d)}-\sum_{i=1}^d\tfrac{(d^2u_i^2-1)^2}{(d^2-1)(d^2u_i^2-1)\alpha}\notag\\
&=\tfrac{1}{(1-\alpha)(1+d)}>0.
\end{align}
\end{proof}

To study how quantum correlations behave with PP states, we consider the isotropic state which is a PP state with $\ket{\psi}$ being maximally entangled (i.e. $u_i=\sqrt{1/d}, \forall i$).  Both QD and CC for isotropic states are depicted in Figs. \ref{Fig:QDIso} and \ref{Fig:CCIso} respectively.  

Unlike the Werner states, the classical correlations do not vanish in large dimensions, as evident from Fig. \ref{Fig:CCIso}.  
One might wonder how the quantum and classical correlations compare to one another.  It can be shown that QD $\geq$ CC in general, although the exact expression for CC is a bit messy.  However, for large $d$, the classical correlations behave as
\begin{equation}
\mathcal{C}^{\leftarrow}(\rho)\approx\mathcal{D}^{\leftarrow}(\rho)-H(\alpha), \qquad d>>1.
\end{equation}
In Fig. \ref{Fig:QC-CCIso} we plot the difference.  

\begin{figure}[b]
\includegraphics[scale=.5]{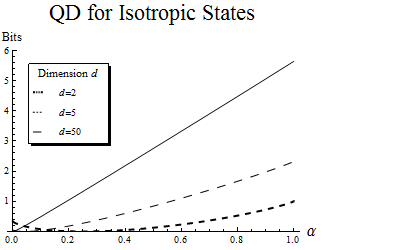}
\caption{\label{Fig:QDIso}
The quantum discord in isotropic states of various dimensions.} 
\end{figure}

\begin{figure}[t]
\includegraphics[scale=.5]{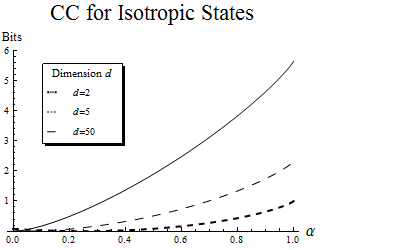}
\caption{\label{Fig:CCIso}
The classical correlations in isotropic states of various dimensions.} 
\end{figure}

\begin{figure}[t]
\includegraphics[scale=.5]{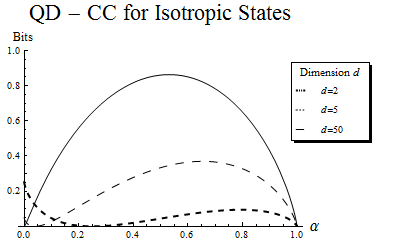}
\caption{\label{Fig:QC-CCIso}
The difference in discord versus classical correlations for isotropic states.  For large $d$, the difference approach $H(\alpha)$.} 
\end{figure}

\bigskip

\noindent\textbf{The GD and Negativity of PP States}

Now we proceed to compute the Geometric Measure of Discord for PP states.  For the basis $\ket{\eta_k}$ on Bob's side, let $p_k=tr[\ip{\eta_k}{\psi}\ip{\psi}{\eta_k}]$.  Then we have
\begin{equation}
\bra{\eta_k}\rho\ket{\eta_k}=[\beta +p_k(\alpha-\beta)]\op{\psi_k}{\psi_k}+\beta(\mathbb{I}-\op{\psi_k}{\psi_k})
\end{equation}
and so 
\begin{align}
tr[\bra{\eta_k}\rho\ket{\eta_k}^2]=f(p_k)+(d-1)\beta^2
\end{align}
where $f(x)=[\beta+x(\alpha-\beta)]^2$.  The function $f(x)$ is convex, and so $\sum_k f(p_k)\leq\sum_k f(q_k)$ for any probability distribution $(q_k)$ that  majorizes  $(p_k)$: $(p_k)\prec(q_k)$ \cite{Bhatia-2000a}.  A well known majorization result says that $(p_k)\prec (u_k^2)$ since the $p_k$ are measurement statistics on the state $tr_A\op{\psi}{\psi}$ whose eigenvalues are $u_k^2$ \cite{Nielsen-2000a}.  Thus, 
\begin{align*}
\sum_ktr[\bra{\eta_k}\rho\ket{\eta_k}^2&\leq \sum_k [f(u_k^2)+(d-1)\beta^2]\\
&=\sum_k[\beta^2 d+2u_k\beta(\alpha-\beta)+u_k^4(\alpha-\beta)^2]\\
&=\beta^2 d^2+\alpha^2-\beta^2-2\sum_{i<j}u_i^2u_j^2(\alpha-\beta)^2
\end{align*}
Referring to Eq. \eqref{Eq:GD}, we see that
\begin{align}
\label{Eq:PPGD}
\mathcal{D}_G^{\leftarrow}(\rho_{PP})&=\frac{d}{d-1}2\sum_{i<j}u_i^2u_j^2(\alpha-\beta)^2\notag\\
&=\frac{2d(\alpha d^2-1)^2}{(d^2-1)^3}\sum_{i<j}u_i^2u_j^2\notag\\
&=\frac{(\alpha d^2-1)^2}{(d^2-1)^2}\mathcal{D}^{\leftarrow}_G(\psi)
\end{align}
where $\mathcal{D}^{\leftarrow}_G(\psi)=\frac{d}{d-1}(1-tr(\rho^A)^2)$ is the GD of the pure state $\ket{\psi}$ (equivalently the linear entropy).  Note that like the discord, GD is also convex with respect to the mixing parameter $\alpha$.

In Ref. \cite{Girolami-2011a} Girolami and Adesso studied the relationship between measures of entanglement and more general measures of quantumness.  We have already seen that there is no general hierarchy between EoF and QD.  However, Girolami and Adesso consider GD and the Negativity measure of entanglement \cite{Vidal-2002b}.  In its normalized form the Negativity of a state $\rho$ is given by $\mathcal{N}(\rho)=\tfrac{1}{d-1}(||\rho^{\Gamma}||_1-1)$.  Here $||A||_1=tr\sqrt{A^\dagger A}$ is the trace norm and $A^{\Gamma}$ is the partial transpose of $A$.  For density matrices, $||\rho^{\Gamma}||_1-1$ is computed by summing over the negative eigenvalues of $\rho^{\Gamma}$ and taking twice its absolute value.  Girolami and Adesso were able to prove that $\mathcal{N}^2\leq\mathcal{D}_G^{\leftarrow}$ for all pure states, two-qubit states, and the Werner and isotropic states.  The intuition is that if discord captures a greater amount of correlations than entanglement, there should be some way to quantify this relationship.  Thus, $\mathcal{N}^2\leq\mathcal{D}_G^{\leftarrow}$ seems like a reasonable conjecture for all states.

While the Negativity is a computable measure, for higher dimensional systems it generally does not have an analytic form.  However for PP states $\alpha\op{\psi}{\psi}+\beta(\mathbb{I}-\op{\psi}{\psi})$, the computation can be easily done.  For $\ket{\psi}=\sum_iu_i\op{ii}{ii}$, we have 
\begin{align}
\psi^{\Gamma}&=\sum_{i,j}u_iu_j\op{ij}{ji}\notag\\
&=\sum_{i}u_i^2\op{ii}{ii}+\sum_{i<j}u_iu_j[\Psi^+_{ij}-\Psi^-_{ij}]
\end{align}
where $\Psi^{\pm}_{ij}=\op{\Psi^{\pm}_{ij}}{\Psi^{\pm}_{ij}}$ and $\ket{\Psi^{\pm}_{ij}}=\sqrt{1/2}(\ket{ij}\pm\ket{ji})$. Thus,
\begin{align}
\rho_{PP}^{\Gamma}&=\sum_i[\beta+(\alpha-\beta)u_i^2]\op{ii}{ii}\notag\\
+&\sum_{i<j}[\beta+(\alpha-\beta)u_iu_j]\Psi^+_{ij}+\sum_{i<j}[\beta-(\alpha-\beta)u_iu_j]\Psi^{-}_{ij}.
\end{align}
Let $S_{-}$ be the set of $(i,j)$ such that $i<j$ and $\beta-(\alpha-\beta)u_iu_j<0$.  Then
\begin{equation}
\mathcal{N}(\rho_{PP})=\tfrac{2}{(d-1)^2(d+1)}\sum_{(i,j)\in S_{-}}[u_{i}u_j(\lambda d^2-1)-(1-\lambda)].
\end{equation}

To obtain a bound, we have
\begin{align}
\mathcal{N}^2(\rho_{PP})\leq [\tfrac{2(\lambda d^2-1)}{(d-1)(d^2-1)}\sum_{i<j} u_iu_j]^2=\frac{(\lambda d^2-1)^2}{(d^2-1)^2}\mathcal{N}^2(\psi).
\end{align}
Then using the fact that $\mathcal{D}_G(\psi)\geq \mathcal{N}(\psi)^2$ for pure states $\ket{\psi}$ \cite{Girolami-2011a}, we see that $\mathcal{D}_G^{\leftarrow}(\rho_{PP})\geq \mathcal{N}^2(\rho_{PP})$.

\section{Conclusion}
\label{Sect:Conclusion}
In this article we have explicitly computed the quantum correlations for Werner states and pseudo-pure states.  For these families of states, we have shown that the Quantum Discord and the Relative Entropy of Quantumness coincide.  Explicit formulas for the correlations are given by Eqs. \eqref{Eq:Wernerstatediscord} and \eqref{Eq:DisPP} for Werner states and pseudo-pure state respectively.  The Geometric Measure of Discord for PP states is given by Eq. \eqref{Eq:PPGD}.

We remark that this is one of the very few studies where the quantum correlations have actually been calculated.  Doing so has revealed an interesting variety of behavior.  For Werner states, the quantum correlations are always bounded by 1 while converging to the symmetric functional form of $1-H(\lambda)$ in higher dimensions.  On the other hand, for PP states, the quantum correlations behave like $\alpha H(\psi)$ for large $d$.  Physically, we see that by introducing randomness to an initially pure state can only decrease the quantum correlations.  Finally, we have shown that $\mathcal{D}^{\leftarrow}_G\geq \mathcal{N}^2$ for all PP states.  It is natural to wonder what other families of states satisfy this hierarchy, if not for all states.  We hope this research will provide a deeper insight into the structure of quantum correlations.

\noindent\textit{Note added:}  For two qubit systems, equations similar to \eqref{Eq:Wernerstatediscord} and \eqref{Eq:DisPP} for the isotropic state have been previously observed \cite{Piani-2011b}.

\begin{acknowledgments}
I'd like to thank Christian Weedbrook, Asma Al-Qasimi, Gerardo Adesso and Marco Piani for partaking in helpful discussions on this topic.  This work was supported by funding agencies including CIFAR, the CRC program, NSERC, and QuantumWorks.
\end{acknowledgments}

\bibliography{EricQuantumBib}

\end{document}